\newcommand{\para}{\paragraph}
\newcommand{\newparentheses}[3]{%
  \expandafter\newcommand\csname #1\endcsname[1]{#2##1#3}%
  \expandafter\newcommand\csname #1L\endcsname[1]{\bigl#2##1\bigr#3}%
  \expandafter\newcommand\csname #1XL\endcsname[1]{\Bigl#2##1\Bigr#3}%
  \expandafter\newcommand\csname #1V\endcsname[1]{\left#2##1\right#3}}
\newcommand{\onenewattribute}[4]{%
  \@ifundefined{#2}{\let\@@def\newcommand}{\let\@@def\renewcommand}%
  \expandafter\@@def\csname #2\endcsname[1][]{%
    \def\first@arg{##1}\csname @#2\endcsname}%
  \@ifundefined{@#2}{\let\@@def\newcommand}{\let\@@def\renewcommand}%
  \expandafter\@@def\csname @#2\endcsname[2][]{%
    \ifthenelse{\equal{#1}{sub}}%
    {\csname @@#2\endcsname{##1}{\first@arg}{##2}}%
    {\csname @@#2\endcsname{\first@arg}{##1}{##2}}}
  \@ifundefined{@@#2}{\let\@@def\newcommand}{\let\@@def\renewcommand}%
  \expandafter\@@def\csname @@#2\endcsname[3]{%
    \ifthenelse{\equal{##1}{}}%
    {\ifthenelse{\equal{##2}{}}%
      {#3\csname #4\endcsname{##3}}%
      {#3_{##2}\csname #4\endcsname{##3}}}%
    {\ifthenelse{\equal{##2}{}}%
      {#3^{##1}\csname #4\endcsname{##3}}%
      {#3_{##2}^{##1}\csname #4\endcsname{##3}}}}}
\newcommand{\newattribute}[3][sub]{%
  \onenewattribute{#1}{#2}{#3}{parens}%
  \onenewattribute{#1}{#2L}{#3}{parensL}%
  \onenewattribute{#1}{#2XL}{#3}{parensXL}%
  \onenewattribute{#1}{#2V}{#3}{parensV}}
\newcommand{\newproperty}[3][sub]{%
  \@ifundefined{#2}{\let\@@def\newcommand}{\let\@@def\renewcommand}%
  \expandafter\@@def\csname #2\endcsname[2][]{%
    \ifthenelse{\equal{#1}{sub}}%
    {\ifthenelse{\equal{##1}{}}%
      {#3_{##2}}%
      {#3_{##2}^{##1}}}%
    {\ifthenelse{\equal{##1}{}}%
      {#3^{##2}}%
      {#3_{##1}^{##2}}}}}
\newcommand{\M}{\mathcal{M}}
 \newtheorem{theorem}{Theorem}
 \newtheorem{lemma}{Lemma}
 \newtheorem{corollary}{Corollary}
\def\inmod#1{\allowbreak\mkern5mu({\operator@font mod}\,\,#1)}
\newcommand{\C}{\mathcal{C}}
\newcommand{\s}{\mathcal{S}}
\newcommand{\E}{\mathbb{E}}
\newcommand{\hev}{\mbox{\tiny heavy}}
\title{Sorting and Permuting without Bank Conflicts on GPUs}
\author{Peyman Afshani \thanks{Work supported in part by the Danish National Research Foundation grant DNRF84 through Center for Massive Data Algorithmics (MADALGO)} \\ 
   MADALGO, Aarhus University, Denmark
  \and 
  Nodari Sitchinava \\ 
   University of Hawaii, Manoa, HI, USA
}
\date{}
\begin{document}
\maketitle
\begin{abstract}

In this paper, we look at the complexity of designing algorithms without any bank conflicts in the shared memory of Graphical Processing Units (GPUs).  Given input of size $n$, $w$ processors and $w$ memory banks, we study three fundamental problems: sorting, permuting and $w$-way partitioning (defined as sorting an input containing exactly $n/w$ copies of every integer in $[w]$).  

We solve sorting in optimal $O(\frac{n}{w} \log n)$ time.  When $n \ge w^2$, we solve the partitioning problem optimally in $O(n/w)$ time. We also present a general solution for the partitioning problem which takes $O(\frac{n}{w} \log^3_{n/w} w)$ time.  Finally, we solve the permutation problem using a randomized algorithm in $O(\frac{n}{w} \log\log\log_{n/w} n)$ time. Our results show evidence that when working with banked memory architectures, there is a separation between these problems and the permutation and partitioning problems are not as easy as simple parallel scanning.  
\end{abstract}

\bigskip

\centerline{{\bf Keywords}: GPU, model, algorithm design, efficient algorithms, prefix sums, sorting.}



\section{Introduction}

Graphics Processing Units (GPUs) over the past decade have been transformed
from special-purpose graphics rendering co-processors, to a powerful platform
for general purpose computations, with runtimes rivaling best implementations
on many-core CPUs. With high memory throughput, hundreds of physical cores and
fast context switching between thousands of threads, they became very popular among
computationally intensive applications. Instead of citing a
tiny subset of such papers, we refer the reader to {\tt gpgpu.org} website~\cite{gpgpu.org}, which lists over 300 research papers on this topic.

Yet, most of these results are experimental and the theory community seems to
shy away from designing and analyzing algorithms on GPUs. In part, this is
probably due to the lack of a simple theoretical model of computation for GPUs.
This has started to change recently, with introduction of several theoretical
models for algorithm analysis on GPUs.

\para{A brief overview of GPU architecture.}
A modern GPU contains hundreds of physical cores. To implement such a large
number of cores, a GPU is designed hierarchically. It consists of a number of
{\em streaming multiprocessors (SMs)} and a {\em global memory} shared by all
SMs. Each SM consists of a number of cores (for concreteness, let us
parameterize it by $w$) and a {\em shared memory} of limited size which is shared 
by all the cores within the SM but is inaccessible by other SMs.
With computational power of hundreds of cores, 
latency of accessing memory becomes non-negligible and GPUs take several
approaches to mitigate the problem. 
\begin{figure}[tb]
\begin{center}
  \includegraphics[width=.5\textwidth]{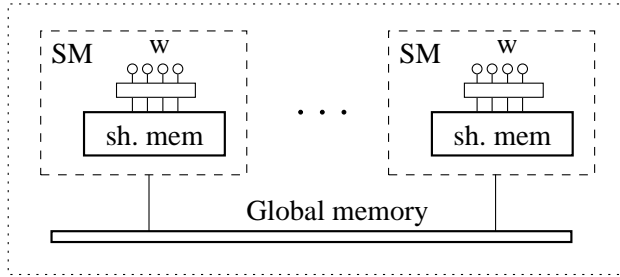}
\end{center}
\caption{A schematic of GPU architecture}
\end{figure}

First, they support massive hyper-threading, i.e., multiple logical threads may
run on each physical core with light context switching between the 
threads. Thus, when a thread stalls on a memory request, the other threads can
continue running on the same core. To schedule all these threads efficiently,
groups of $w$ threads, called {\em warps}, are scheduled to run on $w$ physical
cores simultaneously in {\em single instructions, multiple data
(SIMD)}~\cite{flynn:simd} fashion.

Second, there are limitations on how data is accessed in memory. Accesses to
global memory are most efficient if they are {\em coalesced}. Essentially, it
means that $w$ threads of a warp should access contiguous $w$ addresses of
global memory. On the other hand, shared memory is partitioned into $w$ {\em
memory banks} and each memory bank may service at most one thread of a warp in
each time step. If more than one thread of a warp requests access to the same
memory bank, a {\em bank conflict} occurs and multiple
accesses to the same memory bank are sequentialized. Thus, for optimal
utilization of processors, it is recommended to design algorithms that
perform coalesced accesses to global memory and incur no bank conflicts in
shared memory~\cite{nvidia:guide}.

\para{Designing GPU algorithms.}
Several papers~\cite{nakano:DMM,haque:model,sitchinava:gpu-model,ma-model}
present theoretical models that incorporate the concept of coalesced accesses
to global memory into the performance analysis of GPU algorithms. In essence,
all of them introduce a complexity metric that counts the number of blocks
transferred between the global and internal memory (shared memory or
registers), similar to the I/O-complexity metric of sequential and parallel
external memory and cache-oblivious models on
CPUs~\cite{av:em-model,arge:pem-model,frigo99,blelloch:multicore-dnc}. The
models vary in what other features of GPUs they incorporate and, consequently,
in the number of parameters introduced into the model.

Once the data is in shared memory, the usual approach is to implement standard
parallel algorithms in the PRAM model~\cite{jaja:book} or interconnection
networks~\cite{leighton:book}. For example, sorting data in shared memory, is
usually implemented using sorting networks, e.g. Batcher's odd-even
mergesort~\cite{batcher:networks} or bitonic mergesort~\cite{batcher:networks}.
Even though these sorting networks are not asymptotically optimal, they provide
good empirical runtimes because they exhibit small constant factors, they fit
well in the SIMD execution flow within a warp, and for small inputs asymptotic
optimality is irrelevant. On very small inputs (e.g. on $w$ items -- one per memory
bank) they also cause no bank conflicts. 

However, as the input sizes for shared memory algorithms grow, bank conflicts
start to affect the running time. algorithm design. 

The first paper that studies bank conflicts on GPUs is by Dotsenko et
al.~\cite{dotsenko:scan}. The authors view shared memory as a two-dimensional
matrix with $w$ rows, where each row represents a separate memory bank. Any
one-dimensional array $A[0..n-1]$ will be laid out in this matrix in
column-major order in $\ceil{n/w}$ contiguous columns. Note, that {\em strided}
parallel access to data, that is each thread $t$ accessing array entries
$A[wi+t]$ for integer $0 \le i < \ceil{n/w}$, does not incur bank conflicts
because each thread accesses a single row. The authors also observed that the
{\em contiguous} parallel access to data, that is each thread  scanning a
contiguous section of $\ceil{n/w}$ items of the array,  also incurs no bank
conflicts if $\ceil{n/w}$ is co-prime with $w$. Thus, with some extra padding,
contiguous parallel access to data can also be implemented without any bank
conflicts. 

Instead of adding padding to ensure that $\ceil{n/w}$ is co-prime with $w$,
another solution to bank-conflict-free contiguous parallel access is to convert
the matrix from column-major layout to row-major layout and perform strided
access. This conversion is equivalent to in-place transposition of the matrix.
Catanzaro et al.~\cite{catanzaro:transpose}  study this problem and present
an elegant bank-conflict-free parallel algorithm that runs in $\ThetaOf{n/w}$ time, which is optimal.

Sitchinava and Weichert~\cite{sitchinava:gpu-model} present a strong
correlation between bank conflicts and the runtime for some problems. Based on
the matrix view of shared memory, they developed a sorting network that incurred
no bank conflicts. They show that although compared to Batcher's sorting
networks their solution incurs extra $\ThetaOf{\log n}$ factor in parallel
time and work, it performs better in practice because it incurs no bank
conflicts.

Nakano~\cite{nakano:DMM} presents a formal definition of a parallel model with the matrix view of shared memory, which is also extended to model memory access
latency hiding via hyper-threading.\footnote{Nakano's DMM exposition actually
swapped the rows and columns and viewed memory banks as columns of the matrix
and the data laid out in row-major order.} He calls his model {\em Discrete Memory
Model (DMM)} and studies the problem of {\em offline permutation}, which we define in detail later.

The DMM model is probably the simplest abstract model that captures the important aspects of designing bank-conflict-free algorithms for GPUs. In this paper we will work in this model. However, to simplify the exposition, we will assume that each memory access incurs no latency (i.e. takes a unit time) and we have exactly $w$ processors. This simplification still captures the key algorithmic challenges of designing bank-conflict-free algorithms without the added complexity of modeling hyper-threading. We summarize the key features of the model below, and for more details refer the reader to~\cite{nakano:DMM}.

\para{Model of Computation.}
Data of size $n$ is laid out in memory as a matrix $\M$ with dimensions $w \times m$, where $m = \ceil{n/w}$. As in the PRAM model, $w$ processors proceed synchronously in discrete time steps, and in each time step perform access to data (a processor may skip accessing data in some time step).  Every processor can access any item within the matrix. However, an algorithm must ensure that in each time step at most one processor accesses data within a particular row.\footnote{This is analogous to how EREW PRAM model requires the algorithms to be designed so that in each time step at most one processor accesses any memory address.} Performing computation on a constant number of items takes unit time and, therefore, can be completed within a single time step. The parallel time complexity (or simply {\em time}) is the number of time steps required to complete the task. The work complexity (or simply {\em work}) is the product of $w$ and the time complexity of an algorithm.\footnote{Work complexity is easily computed from time complexity and number of processors, therefore, we don't mention it explicitly in our algorithms. However, we mention it here because it is a useful metric for efficiency, when compared to the runtime of the optimal sequential algorithms.}

Although the DMM model allows each processor to access any memory bank (i.e. any row of the matrix), to simplify the exposition, it is helpful to think of each processor fixed to a single row of the matrix and the transfer of information between the processors being performed  via ``message passing'', where 
at each step, processor $i$ may send a message (constant words of information) to another
processor $j$ (e.g., asking to read or write a memory location within row $j$). 
Next, the processor $j$ can respond by sending constant words of information back to row $i$.
Crucially and to avoid bank conflicts, we demand that at each parallel computation step,
at most one message is received by each row; we call this the ``Conflict Avoidance Condition'' or CAC.

Note that this view of interprocessor communication via message passing is only
done for the ease of exposition, and algorithms can be implemented in the DMM model (and
in practice) by processor $i$ directly reading or writing the contents of the
target memory location from the appropriate location in row $j$. Finally, CAC
is equivalent to each memory bank being access by at most one processor in each
access request made by a warp.

\para{Problems of interest.}
Using the above model, we study complexity of developing bank conflict free algorithms for the following fundamental problems:

$\bullet$ \textit{Sorting}: The matrix $\M$ is populated with items from a totally ordered universe. The goal
    is to have $\M$ sorted in row-major order.\footnote{ The final layout within the matrix (row-major or column-major order) is of little relevance, because the conversion between the two layouts can be implemented efficiently in time and work required to simply read the input~\cite{catanzaro:transpose}. }

$\bullet$ \textit{Partition}: The matrix $\M$ is populated with labeled items. The labels form a permutation that contains $m$ copies of every integer in $[w]$ and an item with label $i$ needs to be sent to row $i$.

$\bullet$ \textit{Permutation}: The matrix $\M$ is populated with labeled items. The labels form a permutation of tuples $[w] \times [m]$.  And an item with label $(i,j)$ needs to be sent to the $j$-th memory location in row $i$.

While the sorting problem is natural, we need to mention a few remarks regarding the permutation and the partition problems.
Often these two problems are equivalent and thus there is little motivation to separate them.
However rather surprisingly, it turns out that in our case these problems are in fact very different. 
Nonetheless, the permutation problem can be considered an ``abstract'' sorting problem where all the
comparisons have been resolved and the goal is to merely send each item to its correct location.
The partition problem is more practical and it appears in scenarios where the goal is to split an input into
many subproblems. For example, consider a quicksort algorithm with pivots, $p_0 = -\infty, p_1, \cdots, p_{w-1}, p_w = \infty$. 
In this case, row $i$ would like to send all the items that are greater than $p_{j-1}$ but less than $p_j$ to row $j$ but row $i$
would not know the position of the items in the final sorted matrix.
In other words, in this case, for each element in row $i$, we only know the destination row rather than the
destination row and the rank.

\para{Prior work.}
Sorting and permutation are among the most fundamental algorithmic problems. 
The solution to the permutation problem is trivial in the random access memory
models -- $\OhOf{n}$ work is required in both RAM and PRAM models of
computation -- while sorting is often more difficult (the $\Omega(n \log n)$ comparison-based lower bound is a
very classical result). However, this picture changes in other models.
For example, in both the sequential and parallel external memory models~\cite{av:em-model,arge:pem-model},
which model hierarchical memories of modern CPU processors, existing lower bounds
show that permutation is as hard as sorting (for most realistic parameters of the models)~\cite{av:em-model,gero:thesis}.

In the context of GPU algorithms, matrix transposition was studied as a special case of the permutation problem by
Catanzaro et al.~\cite{catanzaro:transpose} and they showed that one can
transpose a matrix in-place without bank conflicts in $\OhOf{n}$ work. While
they didn't explicitly mentioned the DMM model, their analysis holds trivially
in the DMM model with unit latency. 
Nakano~\cite{nakano:DMM} studied the problem of performing arbitrary
permutations in the DMM model {\em offline}, where
the permutation is known in advance and we are allowed to
pre-compute some information before running the algorithm.
The time to perform the precomputation is not counted toward the complexity of
performing the permutation. Offline permutation is useful if the permutation
is fixed for a particular application and we can encode the
precomputation result in the program description. Common examples of fixed
permutations include matrix transposition, bit-reversal permutations, and FFT permutations. 
Nakano showed that any offline permutation can be implemented in linear work.  
The required pre-computation in Nakano's algorithm is coloring of a regular bipartite graph, 
which seems very difficult to adapt to the {\em online} permutation problem.

As mentioned earlier, Sitchinava and Weichert~\cite{sitchinava:gpu-model}
presented the first algorithm for sorting $w \times w$ matrix which incurs no
bank conflicts. 
They use Shearsort~\cite{shearsort}, which repeatedly sorts columns of the matrix in
alternating order and rows in increasing order. 
After $\ThetaOf{\log w}$ repetitions, the matrix is sorted in column-major order. 
Rows of the matrix can be sorted without bank conflicts. 
And since matrix transposition can be
implemented without bank conflicts, the columns can also be sorted without bank
conflicts via transposition and sorting of the rows. 
The resulting runtime is $\ThetaOf{t(w)\log w}$, where $t(w)$ is the time it takes to sort an array of
$w$ items using a single thread.

\para{Our Contributions.}
In this paper we present the following results.

$\bullet$ \textit{Sorting}: We present an algorithm that runs in $O(m \log (mw))$ time, which is optimal in the context of comparison based algorithms.

$\bullet$ \textit{Partition}: We present an optimal solution that runs in $O(m)$ time when $w \le m$.
    We generalize this to a solution that runs in $O(m \log^3_m w)$ time.

$\bullet$ \textit{Permutation}: We present a randomized algorithm that runs in expected $O(m \log\log\log_m w)$ time.  Even though this is a rather technical solution (and thus of theoretical interest), it strongly hints at a possible separation between the partition and permutation problems in the DMM model.


\section{Sorting and Partitioning}
In this section we improve the sorting algorithm of Sitchinava and Weichert~\cite{sitchinava:gpu-model} by removing a  $\ThetaOf{\log w}$ factor.
We begin with our base case, a ``short and wide''  $w \times m$ matrix $\M$ where $w \le \sqrt{m}$.
The algorithm repeats the following twice and then sorts the rows in ascending order.

\begin{enumerate}
  \item Sort rows in alternating order (odd rows ascending, even rows descending).
  \item Convert the matrix from row-major to column-major layout (e.g., the first $w$ elements of the first row
      form the first column, the next $w$ elements form the second column and so on).
  \item Sort rows in ascending order.
  \item Convert the matrix from column-major to row-major layout (e.g., the first $m/w$ columns will form the first row).
\end{enumerate}

\begin{figure}[t]
    \centering
    \includegraphics{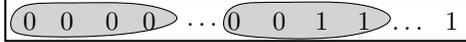}
    \caption{Consider one row that is being converted to column-major layout. The elements that will be put in the same column
    are bundled together. It is easily seen that each row will create at most one dirty column.}
    \label{fig:dirt}
\end{figure}
\begin{lemma}\label{lem:fat}
The above algorithm sorts the matrix correctly in $O(m \log m)$ time.
\end{lemma}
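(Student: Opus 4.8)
The plan is to prove correctness via the $0$-$1$ principle and then bound the running time directly. Every step of the algorithm is either a ``sort rows'' operation or a fixed, data-independent permutation of cell positions (the two layout conversions). Since the result of sorting each row is uniquely determined, we may without loss of generality implement each ``sort rows'' step by a sorting network; the algorithm then becomes an oblivious comparison network, so by the $0$-$1$ principle it sorts every input into row-major order provided it sorts every $\{0,1\}$-matrix into row-major order. (For the time bound we will instead realize each ``sort rows'' step by an ordinary $O(m\log m)$ sequential sort, which is legitimate because correctness depends only on each row ending up sorted; we also assume for convenience that $w$ divides $m$, padding otherwise.)

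So fix a $\{0,1\}$-matrix; call a row or column \emph{dirty} if it contains both a $0$ and a $1$. After Step~1 every row is sorted, so the claim of Figure~\ref{fig:dirt} holds: cutting a sorted row into $m/w$ consecutive length-$w$ blocks and reading those blocks off as columns produces at most one dirty column, whence after Step~2 there are at most $w$ dirty columns. Next I would count, for each row $s$ of the Step-2 matrix, its number of ones $o_s$; this equals the number of ones among the original columns whose index is congruent to $s$ modulo $w$, and since every sorted original row with $t$ ones contributes $\lfloor t/w\rfloor$ or $\lceil t/w\rceil$ ones to each residue class, the $o_s$ differ pairwise by at most $w$. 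Hence after Step~3 (sort rows ascending) the dirty columns occupy a block of at most $w$ \emph{consecutive} columns, with only all-zero columns to its left and only all-one columns to its right. Since $w\le\sqrt m$ gives $w\le m/w$, such a block meets at most two of the $m/w$-column groups that Step~4 turns into single rows, and when it straddles a group boundary it does so cleanly; so after one iteration (Steps~1--4) the matrix is sorted in row-major order except for a contiguous dirty run of at most $w^2\le m$ entries, spanning at most two rows.

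The second iteration removes this defect, and the \emph{alternating} order of Step~1 is what makes it work. If two rows $\rho,\rho+1$ are dirty after iteration~1 they have opposite parity, so iteration~2 sorts one of them ascending and the other descending. I would then check that after the layout conversion of Step~2 the (at most two) resulting dirty columns have their ones confined, respectively, to a bottom band and a top band of the $w$ rows, that these bands are complementary enough that the per-row one-counts of the Step-2 matrix now spread by at most $1$, and hence that after Step~3 there is at most one dirty column and after Step~4 at most one dirty row --- correctly placed between the all-zero rows and the all-one rows. The concluding ``sort rows ascending'' then sorts that single row, finishing the proof; the remaining cases (zero or one dirty row after iteration~1, matching-parity configurations) are strictly easier. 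I expect this complementarity bookkeeping to be the crux: one must pin down where the $0/1$ boundary of a dirty row falls inside its length-$w$ block, how that determines which rows of the reshaped matrix receive that block's ones, and then verify --- for both parity cases --- that the spread of the column one-counts stays at most one, so that exactly one correctly positioned dirty row survives into the last row-sort.

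Finally, the running time. The algorithm performs five ``sort rows'' phases and four layout conversions. In a ``sort rows'' phase processor $i$ touches only row $i$ (memory bank $i$), so there are no bank conflicts, and one processor sorts $m$ items in $O(m\log m)$ time. Each layout conversion is a transposition-type rearrangement executable without bank conflicts in $O(m)$ time (the analysis of~\cite{catanzaro:transpose} applies in the DMM model with unit latency). The total is $O(m\log m)$.
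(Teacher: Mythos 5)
Your proposal is correct and follows essentially the same route as the paper's proof: invoke the 0--1 principle (the paper re-derives it via invisible ``marks'' rather than by replacing row sorts with networks), show at most $w$ dirty columns after steps 1--2, two \emph{adjacent} dirty rows after steps 3--4, and then use the alternating order of the second pass so that the two dirty rows yield complementary dirty columns, one dirty column after step 3, one dirty row after step 4, which the final row-sort cleans up. Your explicit residue-class/one-count bookkeeping (spread at most $w$ in the first pass, at most $1$ in the second) merely fills in details the paper asserts without calculation, and those claims are indeed true, so there is no gap.
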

\begin{proof}
    We would like to use the 0-1 principle but since are not working in a sorting network, we simply
    reprove the principle. 
    Observe that our algorithm is essentially oblivious to the values in the matrix and only takes into
    account the relative order of them.
    Pick a parameter $i$, $1 \le i \le mw$, that we call the \textit{marking value}.
    Based on the above observation, we attach a mark of  ``0'' to any element that has rank less than $i$
    and ``1'' to the remaining elements. These marks are symbolic and thus invisible to the algorithm.
    We say a column (or row) is \textit{dirty} if it contains both elements with mark ``0'' and ``1''.
    After the first execution of steps 1 and 2, the number of dirty columns is reduced to at most  $w$ (Fig.~\ref{fig:dirt}).
    After the next execution of steps  3 and 4, the number of dirty rows is reduced to two.
    Crucially, the two dirty rows are adjacent.
    After one more execution of steps 1 and 2, there would be two dirty columns, however, since the rows were
    ordered in alternating order in step 1, one of the dirty rows will have ``0''s towards the top and ``1'' towards the
    bottom, while the other will them in reverse order.
    This means, after step 3, there will be only one dirty column and only one dirty row after step 4.
    The final sorting round will sort the elements in the dirty row and thus the elements will be correctly sorted
    by their mark.

    As the algorithm is oblivious to marks, the matrix will be sorted by marks regardless of 
    our choice of the marking value, meaning, the matrix will be correctly sorted.
    Conversions between column-major and row-major (and vice versa) can be performed in $O(m)$ time using~\cite{catanzaro:transpose}
    while respecting CAC.
    Sorting rows is the main runtime bottleneck and the only part that needs more than $O(m)$ time.
\end{proof}

\begin{corollary}
    The partition problem can be solved in $O(m)$ time if $w \le \sqrt{m}$.
\end{corollary}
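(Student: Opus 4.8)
The plan is to reduce the partition problem to the sorting problem of Lemma~\ref{lem:fat}. Recall that in the partition problem each of the $n = mw$ items carries a label in $[w]$, the multiset of labels contains exactly $m$ copies of each value, and an item labelled $i$ must end up somewhere in row $i$. Since row $i$ has exactly $m$ cells and exactly $m$ items are destined for it, any assignment of the $m$ items with label $i$ to the $m$ cells of row $i$ is a valid solution; in particular it suffices to place the items so that the matrix is sorted by label in row-major order. That is exactly what the algorithm behind Lemma~\ref{lem:fat} does if we use the label as the sort key: after running it, all items with the smallest label occupy the first $m$ cells in row-major order, i.e.\ exactly row $1$ (using $w \le \sqrt m$, so $m \ge w$, the first $m$ cells are precisely row $1$); the next $m$ cells are row $2$; and so on. Hence each item with label $i$ lands in row $i$, as required.

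The one point that needs care is the running time: Lemma~\ref{lem:fat} gives $O(m\log m)$, but we are claiming $O(m)$. The extra logarithmic factor there comes entirely from the row-sorting steps (steps 1 and 3), where a single thread sorts its row of $m$ comparison-based keys in $O(m\log m)$ time; the layout conversions cost only $O(m)$ by~\cite{catanzaro:transpose}. But when the keys are labels drawn from $[w]$ with $w \le \sqrt m \le m$, a single thread can sort a row of $m$ such keys in $O(m)$ time by counting sort (allocate $w$ counters, one pass to count, one pass to emit), and this fits in a row of length $m$ since $w \le m$. Replacing the comparison-based row sort by counting sort in steps 1 and 3 therefore drops the per-row cost to $O(m)$ while leaving the correctness argument of Lemma~\ref{lem:fat} — which only uses obliviousness to the actual key values and the fact that rows get sorted — completely intact. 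The total time becomes $O(m)$.

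The main obstacle, and the only thing one must verify, is that counting sort within a row respects CAC: each thread works entirely inside its own row (its $m$ input cells plus $O(w) \subseteq O(m)$ scratch cells, all in that same row), so no message crosses rows during the sort and there are no bank conflicts. The alternating-order requirement of step 1 (odd rows ascending, even rows descending) is likewise trivial to enforce with counting sort by emitting labels in increasing or decreasing order. With these observations the corollary follows immediately from Lemma~\ref{lem:fat}.
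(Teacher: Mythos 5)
Your proposal is correct and matches the paper's own argument: the paper likewise runs the Lemma~\ref{lem:fat} algorithm on the labels and replaces the comparison-based row sort with an integer sort (radix/counting sort on keys from $[w]$), so each row sorts in $O(m)$ time and the total is $O(m)$. Your additional remarks on CAC and the row-major placement of labels are fine but were left implicit in the paper.
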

\begin{proof}
    Observe that for the partition problem, we can ``sort'' each row in $O(m)$
    time (e.g., using radix sort).
\end{proof}

Using the above as a base case, we can now sort a square matrix efficiently.
\begin{theorem}\label{thm:square}
    If $w = m$, then the sorting problem can be solved in $O(m \log m)$ time. 
    The partition problem can be solved in $O(m)$ time.
\end{theorem}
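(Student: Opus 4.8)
The plan is to reduce the square case $w=m$ to the ``short and wide'' base case of Lemma~\ref{lem:fat} by running that algorithm at two nested levels of granularity. Group the $m$ physical rows of $\M$ into $\sqrt m$ consecutive \emph{bands} of $\sqrt m$ rows each, and regard each band as a single \emph{logical row} holding $\sqrt m\cdot m=m^{3/2}$ entries. This turns $\M$ into a logical $\sqrt m\times m^{3/2}$ matrix, which is short and wide since $\sqrt m\le\sqrt{m^{3/2}}=m^{3/4}$. I would then execute the procedure of Lemma~\ref{lem:fat} on this logical matrix, implementing each of its primitive operations directly on the physical matrix $\M$.

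That procedure uses only two kinds of primitives: (i) sorting the (logical) rows, possibly in alternating order, and (ii) converting the (logical) matrix between row-major and column-major layout, and it invokes only $O(1)$ of each. For (i): a single logical row is a $\sqrt m\times m$ block of physical rows, i.e.\ itself a short-and-wide instance (here $\sqrt m\le\sqrt m$), so Lemma~\ref{lem:fat} sorts it in $O(m\log m)$ time using the $\sqrt m$ processors of its band; running this on all $\sqrt m$ bands simultaneously uses all $m$ processors on disjoint sets of rows, so CAC is preserved and the time remains $O(m\log m)$. For (ii): a layout conversion of the logical matrix is a fixed reshaping permutation of the $m^2$ entries of $\M$; as in the proof of Lemma~\ref{lem:fat} it can be realized bank-conflict-free by a constant number of matrix transpositions in the sense of Catanzaro et al.~\cite{catanzaro:transpose}, and because we now have $m$ processors (rather than the $\sqrt m$ that a stand-alone $\sqrt m\times m^{3/2}$ matrix would be allotted) the cost drops from $O(m^{3/2})$ to $O(m)$. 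Summing, the total time is $O(m\log m)$.

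Correctness comes for free: the proof of Lemma~\ref{lem:fat} is an oblivious $0$--$1$-type argument depending only on the logical $\sqrt m\times m^{3/2}$ structure and on each logical row ending up sorted (ascending or descending), not on how it is sorted. Hence after the procedure the logical matrix is sorted in row-major order, which is exactly $\M$ sorted in row-major order. For the partition claim, replace every row-sort at both levels by radix sort, which (as in the corollary) costs only $O(m)$ per logical row and per band; the $O(1)$ sorts and $O(1)$ conversions then give $O(m)$ total.

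The main obstacle is the bank-conflict-free implementation of primitive (ii): one must verify that the reshaping permutations between the two layouts of the \emph{logical} matrix decompose into a constant number of genuine matrix transpositions handled by Catanzaro et al.'s routine, and that feeding all $m$ processors into that routine indeed brings its cost down to $O(m)$ while still respecting CAC. A secondary and routine point is divisibility --- $\sqrt m$ need not be an integer --- which is dealt with by padding $\M$ with $+\infty$ entries up to the next convenient size, changing $n$ and $m$ only by constant factors.
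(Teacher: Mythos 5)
Your proposal is correct, and it diverges from the paper's proof in its second half. The paper also forms $\sqrt m$ bands (``super-rows'') of $\sqrt m$ consecutive rows and sorts each one with Lemma~\ref{lem:fat}, but it then finishes with an explicit dirty-row chase: sort the columns (transpose, sort rows, transpose back), sort the super-rows in alternating order, sort the columns again, and finally sort the rows, arguing at each stage how many dirty rows remain and where they sit. You instead rerun the entire schedule of Lemma~\ref{lem:fat} on the logical $\sqrt m\times m^{3/2}$ matrix and inherit correctness from its 0--1 argument; this is legitimate, since $\sqrt m\le\sqrt{m^{3/2}}$ so the lemma's hypothesis holds at the logical level, and its proof depends only on the logical rows ending up sorted and on the two reshapes being the stated permutations, not on how these primitives are realized. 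What the paper's route buys is that every primitive it touches (full-matrix transposition, per-row sorts, per-band sorts) is already known to be conflict-free, at the price of a fresh, if short, correctness argument; what your route buys is modularity and a correctness proof for free, at the price of having to implement the logical row-major/column-major reshape without bank conflicts --- exactly the point you flag as the main obstacle.

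That obstacle is real but easily closed, and you should close it rather than leave it as an unverified claim about the routine of Catanzaro et al.~\cite{catanzaro:transpose}. Write a physical position as $\left(I\sqrt m+a,\; b_1\sqrt m+b_0\right)$ with $I,a,b_1,b_0\in[\sqrt m]$; the logical row-major-to-column-major conversion is then the digit permutation $(I,a,b_1,b_0)\mapsto(b_0,I,a,b_1)$. This factors into (a) a row-major-to-column-major conversion inside each band, i.e.\ the conversion of~\cite{catanzaro:transpose} run independently on the $\sqrt m\times m$ bands using each band's own $\sqrt m$ rows, which sends $(I,a,b_1,b_0)$ to $(I,b_0,a,b_1)$ in $O(m)$ time with no conflicts across bands, followed by (b) the fixed permutation of physical rows $I\sqrt m+b_0\mapsto b_0\sqrt m+I$, which is conflict-free in $m$ steps because at step $t$ every row forwards its column-$t$ entry to its (unique) destination row, so each bank receives exactly one request per step. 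The inverse reshape is symmetric, so each logical conversion costs $O(m)$, and your totals of $O(m\log m)$ for sorting and $O(m)$ for partitioning stand. The remaining details you gloss over are indeed routine and are glossed over by the paper as well: descending band sorts are obtained from ascending ones by a reversal (another $O(m)$ conflict-free permutation), and non-integral $\sqrt m$ is handled by padding.
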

\begin{proof}
    Once again, we use the idea of the 0-1 principle and assume the elements are marked with ``0'' and ``1'' bits,
    invisible to the algorithm.
    To sort an $m \times m$ matrix, partition the rows into groups of $\sqrt{m}$ adjacent rows. 
    Call each group a \textit{super-row}. 
    Sort each super-row using Lemma~\ref{lem:fat}. 
    Each super-row has at most one dirty row so there are at most $\sqrt{m}$ dirty rows in total.
    We now sort the columns of the matrix in ascending order by transposing the matrix, sorting the rows, and then 
    transposing it again.
    This will make all the dirty rows appear contiguously.
    We sort each super-row in alternating order (the first super-row ascending,
    next descending and so on).
    After this, there will be at most two dirty rows left, sorted in alternating order.
    We sort the columns in ascending order, which will reduce the number of dirty rows to one. 
    A final sorting of the rows will have the matrix in the correct sorted order.

    The running time is clearly $O(m\log m)$.
    In the partition problem, we use radix sort to sort each row and thus the running time will be $O(m)$.
\end{proof}

For non-square matrices, the situation is not so straightforward and thus  more interesting.
First, we observe that by using the $O(\log w)$ time EREW PRAM algorithm~\cite{cole:mergesort} to sort $w$ items using $w$ processors, the above algorithm can be generalized to non-square matrices:

\begin{theorem}
    A $w  \times m$ matrix $\M$, $w \ge m$, can be sorted in $O(m \log w)$ time.
\end{theorem}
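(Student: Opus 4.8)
The plan is to keep the Shearsort-style skeleton of Theorem~\ref{thm:square}, but to make two changes: the column-sorting step now has to sort sequences of length $w$ (possibly much larger than $m$), so it is carried out by a bank-conflict-free \emph{parallel} comparison sort rather than a single-processor one, and the block structure is chosen so that only a constant number of column-sorting rounds are performed.

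The key new primitive sorts every column of $\M$. A column is a list of $w$ items with exactly one entry in each of the $w$ banks, so any algorithm that operates within a single column is automatically conflict-free. I would sort all $m$ columns by running $m$ independent instances of Cole's $O(\log w)$-time, $w$-processor EREW mergesort~\cite{cole:mergesort}, time-sharing the $w$ physical processors among them: in each sub-step physical processor $b$ performs the action of virtual processor $b$ in one instance, cycling through all $m$ instances before a step is advanced. This takes $O(m\log w)$ time, and since Cole's algorithm is EREW, within a single instance no two of its processors touch the same cell in the same step; as the cells of one column lie in distinct banks, no two physical processors then address the same bank, so the CAC holds. For comparison, sorting a row ($m$ items, one processor) costs $O(m\log m)$ and a Catanzaro layout conversion~\cite{catanzaro:transpose} costs $O(m)$.

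Using these primitives I would run a constant-round two-dimensional sort and verify correctness with the 0-1 principle exactly as in Lemma~\ref{lem:fat}: the procedure is value-oblivious, so after fixing a marking value and the induced $0/1$ labels it suffices to track the dirty rows. Partition the $w$ rows into $\lceil w/m\rceil$ contiguous $m\times m$ blocks and sort each block by Theorem~\ref{thm:square}, in parallel on disjoint bank-sets ($O(m\log m)$ time), which leaves at most one dirty row per block; one column-sort ($O(m\log w)$) then gathers these into a contiguous band of at most $\lceil w/m\rceil$ rows. When $w=O(m^2)$ this band spans $O(1)$ blocks, so a constant number of further alternating-order block-sorts and column-sorts followed by a row-sort reduce the dirty rows to exactly one and finish, for a total of $O(m\log w)$ (and here $\log w=\Theta(\log m)$, so the block-sorts are not a bottleneck). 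When $w$ is much larger than $m$, so that iterating Shearsort would spend $\Theta(\log(w/m))$ column-sorting rounds, I would instead push the array through a fixed-depth two-dimensional sorting network that uses only column-sorts and fixed reshape/shift permutations --- columnsort is such a network --- whose $O(1)$ column-sorts and $O(1)$ permutations cost $O(m\log w)$ and $O(m)$ respectively in our model; correctness then follows from the network, the $0/1$ bookkeeping only confirming that each implemented phase behaves as specified.

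The main obstacle is the column-sort primitive: realizing a comparison sort of length-$w$ lists in $O(m\log w)$ time in which no step ever accesses a bank twice; exploiting the EREW property of Cole's mergesort to obtain the CAC for free is the crucial step. A secondary difficulty, felt only when $w$ is much larger than $m$, is holding the number of column-sorting rounds to a constant, which is what makes a fixed-depth network preferable to naive iterated row/column sorting for very tall matrices.
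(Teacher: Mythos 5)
Your proposal is correct in substance but reaches the bound by a genuinely different route than the paper. The paper's proof is a single unified argument: sort the rows, sort all columns with Cole's EREW mergesort~\cite{cole:mergesort} (the same $O(m\log w)$ column-sort primitive you use), and then --- this is the step your plan replaces --- convert the matrix from column-major to row-major layout, so that each sorted column of length $w$ contributes at most one dirty row; this leaves at most $m$ dirty rows \emph{regardless of how tall the matrix is}, one more column sort makes them contiguous, they then fit within at most two $m\times m$ blocks, and alternating-order block sorts via Theorem~\ref{thm:square} plus a final column and row sort finish in $O(1)$ column-sorting rounds for every $w\ge m$. You instead split into two regimes: a block-Shearsort argument that works only while the band of at most $w/m$ dirty rows spans $O(1)$ blocks (i.e., $w=O(m^2)$), and Leighton's columnsort as a black box for taller matrices. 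That is a legitimate alternative --- columnsort's $O(1)$ column sorts cost $O(m\log w)$ and its fixed reshape and shift permutations are exactly the layout conversions implementable conflict-free in $O(m)$ time via~\cite{catanzaro:transpose} --- but you should state columnsort's applicability condition $w\ge 2(m-1)^2$ explicitly and check that it meshes with the threshold of your first branch (it does, since that branch covers $w=O(m^2)$), whereas the paper's conversion trick makes the case split unnecessary. Two smaller remarks: the $\Theta(\log(w/m))$-round worry you raise is precisely what the paper's column-major-to-row-major conversion eliminates, so no fixed-depth network is needed; and your CAC justification for the time-shared Cole instances only covers accesses to the column cells themselves, not Cole's auxiliary storage --- though the paper asserts the same column-sort primitive with no more justification, so this is a shared, not a new, gap.
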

\begin{proof}
    As before assume ``0'' and ``1'' marks have been placed on the elements in $\M$.
    First, we sort the rows. Then we sort the columns of $\M$ using the EREW PRAM algorithm~\cite{cole:mergesort}.
    Next, we convert the matrix from column-major to row-major layout, meaning, in the first column, the first $m$ elements 
    will form the first row, the next $m$ elements the next row and so on.
    This will leave at most $m$ dirty rows.
    Once again, we sort the columns, which will place the dirty rows contiguously.
    We sort each  $m \times m$  sub-matrix in alternating order, 
    using Theorem~\ref{thm:square}, which will reduce the number of dirty rows to two.
    The rest of the argument is very similar to the one presented for Theorem~\ref{thm:square}.
\end{proof}

The next result might not sound very strong, but it will be very useful in the next section.
It does, however, reveal some differences between the partition and sorting problems.
Furthermore, it is optimal as long as $w$ is polynomial in $m$.

\begin{lemma}\label{lem:par}
    The partition problem on a $w \times m$ matrix $\M$,  $ w \ge m > 2 \sqrt{\log w}$, 
    can be solved in $O(m \log^3_m w)$ time.
    The same bound holds for sorting $O(\log n)$-bit integers.
\end{lemma}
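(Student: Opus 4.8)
~The plan is a digit‑by‑digit recursion on the destination row. Assume for simplicity that $w=m^{k}$, so $k=\log_m w$ (the general case follows by the usual rounding, and the hypothesis $m>2\sqrt{\log w}$ is what keeps the recursion depth $k$ small enough that the per‑level overheads below do not dominate). Write the destination row of an item as a base‑$m$ number $(d_1,\dots,d_k)$. One \emph{level} of the algorithm routes every item into the correct \emph{block} of $w/m$ consecutive rows according to its top digit $d_1$; afterwards $\M$ splits into $m$ independent sub‑instances, one per block, and each block is \emph{again} a balanced partition instance of size $(w/m)\times m$ — every residual label $(d_2,\dots,d_k)$ still occurs exactly $m$ times in its block, no matter how the items happen to be arranged there. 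So I recurse on all $m$ blocks simultaneously (the $w/m$ processors sitting in each block suffice), bottoming out at $w=m$, where Theorem~\ref{thm:square} solves the instance in $O(m)$ time. If one level costs $A(w)$, the total time is $\sum_{j=0}^{k-1}A(w/m^{\,j})$, so it suffices to implement one level — a balanced \emph{$m$‑way} partition of a $w'\times m$ matrix — in $O\!\left(m\log_m^{2}w'\right)$ time; then the geometric sum telescopes to $O(mk^{3})=O(m\log_m^{3}w)$.

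First I would dispatch the cheap parts of one level. Each processor counting‑sorts its own row by the $\lceil\log m\rceil$‑bit key $d_1$ in $O(m)$ time and with no bank conflicts whatsoever; row $r$ now holds its $m$ items grouped into runs of lengths $c_r[0],\dots,c_r[m-1]$, one run per target block. Next I compute, for every pair $(r,a)$, the offset at which row $r$'s run for block $a$ must land inside block $a$; this needs only the column‑wise prefix sums of the $w'\times m$ integer matrix $(c_r[a])_{r,a}$, which I obtain with a fan‑out‑$m$, depth‑$\log_m w'$ reduction over the rows, spending $O(m)$ per level and using $O(m)$‑time rectangular transpositions~\cite{catanzaro:transpose} at each level (so that each $m$‑row block is summed column‑wise without bank conflicts) — $O(m\log_m w')$ in all, already within budget.

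The hard part is the data movement itself: each source row must ship its $m$ items, already bucketed, into the $m$ blocks, where its $c_r[a]\le m$ items for block $a$ occupy a contiguous run spanning at most two of the $w'/m$ rows of that width‑$m$ block; and, crucially, we are free to place a block's incoming items anywhere among that block's $w'$ cells, since they will be re‑partitioned by the recursion. Even so, a single target row can receive runs from up to $\approx m$ different sources, so a naive schedule would need an online edge‑colouring of the resulting degree‑$O(m)$ bipartite ``run graph'' — exactly the step that Nakano~\cite{nakano:DMM} is allowed to precompute offline but that we cannot afford. I expect this to be the main obstacle, and my plan is to avoid the colouring by performing the gather \emph{recursively}, mirroring the outer recursion on the $\log_m w'$ levels of the row hierarchy: at each level one merges the per‑block runs of groups of $m$ adjacent rows using $O(m)$‑time transpositions~\cite{catanzaro:transpose} and local compactions, keeping every block's runs contiguous, at a cost of $O(m\log_m w')$ per level and $O(m\log_m^{2}w')$ overall. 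This is essentially a bank‑conflict‑free ragged transpose, and it is the balance of the instance together with the placement freedom that makes such a conflict‑free schedule exist.

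Finally, for sorting $O(\log n)$‑bit integers, note that $n\le w^{2}$ forces the keys to have only $O(\log_m w)$ base‑$m$ digits, so I would run the $m$‑way routine above as the digit‑distribution phase of an MSD radix sort: a preliminary column‑wise prefix sum over the per‑digit counts reserves the exact contiguous cell range for each of the $m$ (now unequal) buckets, after which the routing step again faces a free‑placement instance; $O(\log_m w)$ such rounds finish the sort, and a final $O(m)$‑per‑row counting sort puts $\M$ in row‑major order, all within $O(m\log_m^{3}w)$ time.
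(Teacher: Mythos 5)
There is a genuine gap, and it is exactly the step you yourself call ``the main obstacle.'' Your recursion skeleton (exact $m$-way split by the top base-$m$ digit of the destination row, then recurse on the $m$ blocks) is fine \emph{if} one level can be implemented, but you never construct the level. The data-movement step --- shipping each row's $m$ bucketed items into their blocks without bank conflicts in $O(m\log_m^2 w')$ time --- is asserted via ``the balance of the instance together with the placement freedom makes such a conflict-free schedule exist.'' Existence is not the issue: the run graph is a bipartite multigraph of maximum degree $O(m)$, so an $O(m)$-round conflict-free schedule exists by K\"onig's theorem, and that is precisely the edge-colouring Nakano is allowed to precompute offline. The whole difficulty in this model is \emph{computing} such a schedule online within the time budget, and your ``recursive ragged transpose'' with ``transpositions and local compactions'' is a hope, not an algorithm: at each merge level the boundary fragments of the per-digit runs can force up to $\Theta(m)$ distinct source rows to write into a single destination row, which is again an online routing/colouring problem. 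Worse, the per-level task (send each item to a prescribed block of $w'/m$ rows, with exactly $w'$ items per block) is itself a balanced partition instance at block granularity, so the plan is essentially circular. The integer-sorting extension inherits the same unproved routing step, now with unequal buckets, which is harder still.

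The paper's proof avoids exact routing altogether, which is why it goes through. It runs an \emph{oblivious} algorithm whose only primitives are row sorts (done in $O(m)$ time by radix sort) and fixed, input-independent layout conversions and transpositions (which respect CAC trivially), and argues correctness through the 0--1 principle: a ``balancing'' phase of $\log_m w$ rounds of $m\times m$ square sorts leaves every row with the number of 0's correct up to $\pm\log_m w$, hence only $d=O(\log_m w)$ dirty columns; a convert-and-divide step turns these into few contiguous dirty rows and splits the instance, and after $O(\log_m w)$ iterations the at most $2^{2\log_m w}\le m/2$ dirty rows (here is where $m>2\sqrt{\log w}$ enters) are made contiguous by a column sort (one recursive call on a $(w/m)\times m$ instance) and cleaned up by two shifted passes of $m\times m$ square sorts, giving $f(w,m)=f(w/m,m)+O(m\log_m^2 w)=O(m\log_m^3 w)$. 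Because every data movement is oblivious, no scheduling problem ever arises, and the integer-sorting claim is immediate since the algorithm only ever needs $O(m)$-time row sorts. If you want to rescue your approach, you would need to give a concrete bank-conflict-free implementation of the per-level distribution; as it stands, that step is the lemma in disguise.
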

\begin{proof}
    We use the idea of 0-1 principle, combined with radix sort, which allows us to sort every row in $O(m)$ time. 
    The algorithm uses the following steps.
    \para{Balancing.}
    Balancing has $\lceil \log_m w \rceil$ rounds; for simplicity, we assume $\log_m w$ is an integer.
    In the zeroth round, we create $w/m$ sub-matrices of dimensions $m\times m$ by partitioning the rows into groups of $m$ contiguous rows
    and then sort each sub-matrix in column-major order (i.e., the elements marked ``0'' will be to the left 
    and the elements marked ``1'' to the right).
    At the beginning of each subsequent round $i$, we have $w/ m^i$ sub-matrices with dimensions $m^i \times m$.
    We partition the sub-matrices into groups of $m$ contiguous sub-matrices; 
    from every group, we create $m^i$ square $m\times m$ matrices:
    we pick the $j$-th row of every sub-matrix in the group, for $1 \le j \le m^i$, to create the $j$-th square matrix in the group.
    We sort each $m \times m$ matrix in column-major order. 
    Now, each group will form a $m^{i+1} \times m$ sub-matrix for the next round.
    Observe that balancing takes $O(m \log_m w)$ time in total.

    \para{Convert and Divide (C$\&$D).}
    With a moment of thought, it can be proven that after \textit{balancing}, each row will have between $\frac{w_0}{w} -\log_m w$ and $\frac{w_0}{w} + \log_m w$
    (resp. $\frac{w_1}{w} -\log_m w$ and $\frac{w_1}{w} + \log_m w$) elements
    marked ``0'' (resp. ``1'') where $w_0$ (resp. $w_1$) is the total number of elements marked ``0'' (resp. ``1'').
    This means, that there are at most $d=2\log_m w$ dirty columns.
    We convert the matrix from column-major to row-major layout, which leaves us with $\frac{w}{m d}$ contiguous dirty rows (each column is placed
    in $\frac{w}{m}$ rows after conversion). 
    Now, we divide $\M$ into $md$ smaller matrices of dimension $\frac{w}{md} \times m$. Each matrix will be a new subproblem.

    \para{The algorithm.}
    The algorithm repeatedly applies \textit{balancing} and \textit{C$\&$D} steps:
    after the $i$-th execution of these two steps, 
    we have $(md)^i$ subproblems where each subproblem is a  $\frac{w}{(md)^i} \times m$ matrix, and in the
    next round, \textit{balancing} and \textit{C$\&$D} operate locally within each subproblem (i.e., they work on $\frac{w}{(md)^i} \times m$ matrices).
    Note that before the divide step, each subproblem has at most 
    $\frac{w}{(md)^i}$ contiguous dirty rows.
    After the divide step, these dirty rows will be sent to at most two different sub-problems, meaning, at the depth $i$ 
    of the recursion, all the dirty rows will be clustered into at most $2^i$ groups of contiguous rows, each
    containing at most $\frac{w}{(md)^i}$ rows.
    Since, $m > 2 \sqrt{\log_m w}$, after  $2\log_m w$ steps of recursion, the subproblems will have at most $m$ rows and thus each subproblem can be sorted in
    $O(m)$ time, leaving at most one dirty row per sub-problem. 
    Thus, we will end up with only $2^{2\log_m w }$ dirty rows in the whole matrix $\M$.
    
    We now sort the columns of the matrix $\M$:
    we fit each column of $\M$ in a $\frac{w}{m} \times m$  submatrix and recurse on the submatrices using the
    above algorithm, then convert each submatrix  back to a column.
    We will end up with a matrix with $2^{2\log_m w }$ dirty rows but crucially,
    these rows will be contiguous. 
    Equally important is the observation that $ 2^{2\log_m w } \le m/2$ since $m > 2\sqrt{\log w}$.
    To sort these few dirty rows, we decompose the matrix into square matrices of $m \times m$, and sort each square matrix.
    Next, we shift this decomposition by $m/2$ rows and sort them again.
    In one of these decompositions, an $m \times m$ matrix  will fully contain all the dirty rows, meaning, we will end up with a fully sorted matrix.
    If $f(w,m)$ is the running time on a $w \times m$ matrix, then we have the recursion 
    $f(w,m) = f(w/m,m) + O(m \log^2_m w)$ which gives our claimed bound.
\end{proof}

\section{A Randomized Algorithm for Permutation}
In this section we present an improved algorithm for the permutation problem that  beats our 
best algorithm for partitioning for when the matrix is ``tall and narrow'' (i.e., $w \gg m$).
We note that the algorithm is only of theoretical interest as it is a bit technical and it uses
randomization. 
However, at least from this theoretical point of view, it hints  that perhaps in 
the GPU model, the permutation problem and the partitioning problem are different and require
different techniques to solve.

Remember that we have a matrix $\M$ which contains a set of elements
with labels. The labels form a permutation of $[w] \times [m]$, and an element with label $(i,j)$
needs to be placed at the $j$-th memory location of row $i$.
From now on, we use the term ``label'' to both refer to an element and its label.

To gain our speed up, we use two crucial properties: first, we use randomization 
and second we use the existence of the second index, $j$, in the labels.

\para{Intuition and Summary.}
Our algorithm works as follows. 
It starts with a preprocessing phase where
each row picks a random word and these random words are used to 
shuffle the labels into a more ``uniform'' distribution.
Then, the main body of the algorithm begins. 
One row picks a random hash function and communicates it to the rest of the rows.
Based on this hash function, 
all the rows compute a common coloring of the labels using $m$ colors $1, \ldots, m$ such that
the following property holds: for each index $i$, $1 \le i \le w$, and among the labels $(i,1),(i,2), \ldots, (i,m)$,
there is exactly one label with  color $j$, for every $1 \le j \le m$. 
After establishing such a coloring, in the upcoming $j$-th step, each row will 
send one label of color $j$ to its correct destination. 
Our coloring guarantees that CAC is not violated.
At the end of the $m$-th step, 
a significant majority of the labels are placed at their correct position and thus
we are left only with a few ``leftover'' labels. 
This process is repeated until the number of ``leftover'' labels is a  $1/\log_m^3 w$ fraction of the original input.
At this point, we use Lemma~\ref{lem:par} which we show can be done in $O(m)$ time since
very few labels are left.
Unfortunately, there will be many technical details to navigate.

\para{Preprocessing phase.}
For simplicity, we assume $w$ is divisible by $m$.
Each row picks random a number $r$ between 1 and $m$ and shifts its labels 
by that amount, i.e., places a label at position $j$ into position $j+r \inmod{m}$.
Next, we group the rows into matrices of size $m \times m$ which we transpose (i.e., the first $m$
rows create the first matrix and so on).

\para{The main body and coloring.}
We use a result by Pagh and Pagh~\cite{PP08}.
\begin{theorem}\cite{PP08}\label{thm:hash}
    Consider the universe $[w]$ and a set $S \subset [w]$ of size $m$.
    There exists an algorithm in the word RAM model 
    that (independent of $S$) can select a family $H$ of hash functions from $[w]$ to $[v]$ 
    in $O(\log m (\log v)^{O(1)})$ time and using $O(\log m + \log\log w)$ bits,
    such that:
    \begin{itemize}
        \item $H$ is $m$-wise independent on $S$ with probability $1- m^{-c}$ for a constant $c$
        \item Any member of $H$ can be represented by a data structure that uses 
            $O( m \log v )$ bits and the hash values can be computed in constant time. 
            The construction time of the structure is $O(m)$.
    \end{itemize}
\end{theorem}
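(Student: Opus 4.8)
This statement is from Pagh and Pagh~\cite{PP08}; we only sketch the construction and indicate where each of the four claimed bounds comes from. The plan is to assemble $h : [w] \to [v]$ out of three ingredients.

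First I would \emph{reduce the universe}: pick a function $g : [w] \to [v']$ with $v' = \mathrm{poly}(m)$ from a simple $O(1)$-time-evaluable family so that, by a birthday-style union bound, $g$ is injective on the (unknown) set $S$ with probability at least $1 - m^{-c}$; on $S$ we may then replace $x$ by $g(x)$ and pretend the universe has polynomial size. The subtlety is the description size: a single multiply--shift descriptor for $g$ would need $\Theta(\log w)$ bits, so instead one iterates the reduction a constant number of times, each step squaring the universe down, until the total descriptor fits in $O(\log m + \log\log w)$ bits. This iterated reduction is also what incurs the $O(\log m\,(\log v)^{O(1)})$ selection time.

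Second, over the polynomial universe I would use a constant number of cheap hash functions $a_0,a_1 : [v'] \to [b]$ with $b = \Theta(m)$, together with two arrays $T_0,T_1 : [b] \to [v]$ of \emph{truly} random entries, and set $h(x) = T_0[a_0(x)] \oplus T_1[a_1(x)]$. The arrays are the only data read at query time, giving constant-time evaluation; they occupy $O(m\log v)$ bits and are filled in $O(m)$ time, accounting for the member-structure bounds. The heart of the argument is the standard ``peeling'' / forest lemma: conditioned on the bipartite multigraph on $[b]\sqcup[b]$ with edge set $\{(a_0(x),a_1(x)) : x\in S\}$ being acyclic, the values $\{h(x) : x \in S\}$ are mutually independent and uniform, hence $m$-wise independent on $S$. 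A first-moment calculation over the random choice of $a_0,a_1$ shows acyclicity holds with probability $1 - O(1/m)$, and, exactly as in \cite{PP08}, one boosts this to $1 - m^{-c}$ either by blowing $b$ up by a constant factor and taking a few independent trials, or by storing the $O(1)$ surplus edges of each cyclic component in a tiny auxiliary dictionary with their own random outputs.

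I expect the only real obstacle to be the first ingredient: one needs a universe-reduction map that is simultaneously $O(1)$-time evaluable, nearly injective on an \emph{arbitrary} $m$-element set, and describable in just $O(\log m + \log\log w)$ bits, and it is the tension among these three requirements that forces the iterated-squaring trick and is the source of the unusual additive $\log\log w$ term. Everything after that --- the forest lemma, the truly-random-table independence argument, and the probability amplification --- is a routine union-bound / first-moment computation that I would not present in detail.
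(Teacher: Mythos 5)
The paper itself contains no proof of this statement: it is imported verbatim from Pagh and Pagh~\cite{PP08} and used as a black box, so there is no internal argument to compare yours against --- what you are really doing is reconstructing the external result. Your sketch does capture the overall architecture of that construction: collapse the universe to polynomial size, then set $h(x) = T_0[a_0(x)] \oplus T_1[a_1(x)]$ with two truly random tables of size $\Theta(m)$, and argue full randomness on $S$ conditioned on the associated bipartite multigraph being a forest.

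Two of your specific steps would fail, however. First, with $b = \Theta(m)$ buckets per side the expected number of cycles in the graph built from $m$ edges is $\Theta(1)$, so acyclicity holds only with constant probability, not $1 - O(1/m)$, and enlarging $b$ by a constant factor does not change this (you would need $b = m^{1+\Omega(1)}$); moreover ``a few independent trials'' is not available to you, since $H$ must be selected independently of $S$ and one can never test whether the graph induced by the unknown $S$ is acyclic, and an auxiliary dictionary of the surplus edges would likewise depend on $S$. Pagh and Pagh instead keep linear-size tables and add a third, low-independence correction term (roughly $O(\log m)$-wise independent, made $O(1)$-time evaluable by tabulation) that repairs the linear dependencies created by cycle edges; the $m^{-c}$ failure probability is the probability that the number of such cyclic ``defects'' exceeds the independence of the correction term. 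Second, iterating multiply--shift reductions does not shrink the seed below $\Theta(\log w)$ bits --- the very first multiplier already costs that much; the standard way to obtain an $O(\log m + \log\log w)$-bit descriptor is reduction modulo a random prime $p$ of $O(\log m + \log\log w)$ bits (each difference $x-y$ with $x,y \le w$ has at most $\log w$ prime factors, so a random such $p$ is injective on $S$ with probability $1 - m^{-c}$), after which a single cheap hash lands in a universe of size polynomial in $m$. Since the theorem enters the paper only as a citation, none of this affects the paper's results, but as written your sketch would not reconstruct the cited theorem.
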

The first row builds $H$ then picks a hash function $h$ from the family. 
This function, which is represented as a data structure, is communicated to the rest of the rows
in $O(\log w + m)$ time as follows: 
assume the data structure consumes $S_m = O(m)$ space.
In the first $S_m$ steps, the first row sends the $j$-th word in the data structure to row $j$. 
In the subsequent $\log(w)$ rounds, 
the $j$-th word is communicated to any row $j'$
with $j' \equiv j \inmod{S_m}$, using a simple broadcasting strategy that doubles the number
of rows that have received the $j$-th word.
This boils down the problem of transferring the data structure to the problem of distributing
$S_m$ random words between $S_m$ rows which can be easily done in $S_m$ steps while respecting CAC.

\para{Coloring.}
A label $(i,j)$ is assigned color $k$ where 
$j \equiv h(i) + k\inmod{m}$.

\para{Communication.}
Each row computes the color of its labels. 
Consider a row $i$ containing some labels. 
The communication phase has $\alpha m$ steps, where $\alpha$ is a large enough constant.  During step $k$, $1 \le k \le m$,
the row $i$ picks a label of color $k$. If no such label exists, then the row
does nothing so let's assume a label $(i_k, j_k)$ has assigned color $k$. 
The row $i$ sends this label to the destination row $i_k$.
After performing these initial $m$ steps, the algorithm repeats these $m$ steps $\alpha-1$ times
for a total of $\alpha m$ steps. 
We claim the communication step respects CAC; assume otherwise that 
another label $(i'_k,j'_k)$ is sent to row $i_k$ during step $k$.
Clearly, we must have $i'_k = i_k$ but this contradicts our coloring since it implies
$j'_k \equiv h(i_k)+ k \equiv j_k \inmod{m}$ and thus $j'_k = j_k$.
Note that the communication phase takes $O(m)$ time as $\alpha$ is a constant.

\para{Synchronization.}
Each row computes the number of labels that still need to be sent to their destination,
and in $O(\log w + m)$ time, these numbers are summed and broadcast to all the rows. 
We repeat the main body of the algorithm as long as 
this number is larger  than $\frac{wm}{\log^3_m} w$.

\para{Finishing.} 
We leave the details here for the appendix but roughly speaking, we do the following:
first, we show that we can pack the remaining elements into a 
$w \times O(\frac{m}{\log^3_m w})$ matrix in $O(m)$ time (this is not trivial as the matrix can have
rows with $\Omega(m)$ labels left).
Then, we use Lemma~\ref{lem:par} to sort the matrix in $O(m)$ time. 
Finishing off the sorted matrix turns out to be relatively easy.

\para{Correctness and Running Time.}
Correctness is trivial as the end all the rows receive their labels and the algorithm is shown to respect CAC.
Thus, the main issue is the running time.
The \textit{finishing} and \textit{preprocessing} steps clearly takes $O(m)$ time.
The running time of each repetition of \textit{coloring}, \textit{communication} and \textit{synchronization} is $O(\log w + m)$
and thus to bound the total running time we simply need to bound how many times the synchronization steps is executed.
To do this, we need the follow lemma (proof is in the appendix).

\begin{lemma}\label{lem:color}
    Consider a row $\ell$ containing a set $\s_\ell$ of $k$ labels, $k \le m$.
    Assume the hash function $h$ is $m$-wise independent on $\s_\ell$. 
    For a large enough constant $\alpha$, let $\C_{\hev}$  be the set of
    colors that appear more than $\alpha$ times in $\ell$, 
    and $\s_{\hev}$ be the set of labels assigned colors from $\C_{\hev}$.
    Then $\E[|S_{\hev}|]=m (k / 2m)^\alpha$ and
    the probability that $|S_{\hev}| = \Omega(m (k/2m)^{\alpha/10})$
    is at most $2^{-\sqrt{m} (k / 2m)^{\alpha/10}}$.
\end{lemma}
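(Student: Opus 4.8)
\emph{Structure of the randomness.}
List the distinct destination rows occurring among the labels of $\s_\ell$ as $i_1,\dots,i_p$ (so $p\le k$), and let $G_r=\{\,j:(i_r,j)\in\s_\ell\,\}$. Since the color of $(i_r,j)$ is $j-h(i_r)\bmod m$, the labels of row $\ell$ destined for $i_r$ receive the $|G_r|$ \emph{distinct} colors obtained by translating the fixed set $G_r$ by the single random value $h(i_r)$. Hence, writing $N_c$ for the number of labels of $\s_\ell$ of color $c$ and $X_{r,c}=\mathbf 1[\text{some label destined for }i_r\text{ has color }c]$, we have $N_c=\sum_{r=1}^p X_{r,c}$ with $X_{r,c}\in\{0,1\}$ and $\Pr[X_{r,c}=1]=|G_r|/m$; moreover, since $h$ is $m$-wise independent on $\s_\ell$ and $p\le m$, for each fixed $c$ the variables $X_{1,c},\dots,X_{p,c}$ are independent. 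In particular $\s_\hev$ is a function of the $p\le m$ independent values $h(i_1),\dots,h(i_p)$, which is what I will exploit for concentration.

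\emph{The expectation.}
From the elementary inequality $N\cdot\mathbf 1[N>\alpha]\le(\alpha+1)\binom N{\alpha+1}$ (both sides vanish for $N\le\alpha$, and for $N>\alpha$ the right side is $N(N-1)\cdots(N-\alpha)/\alpha!\ge N$) we get $|\s_\hev|=\sum_cN_c\,\mathbf 1[N_c>\alpha]\le(\alpha+1)\sum_c\binom{N_c}{\alpha+1}$. For a fixed $c$, independence gives $\E\binom{N_c}{\alpha+1}=e_{\alpha+1}(|G_1|/m,\dots,|G_p|/m)$, where $e_{\alpha+1}$ is the elementary symmetric polynomial of degree $\alpha+1$; and since $(\sum_r x_r)^{\alpha+1}\ge(\alpha+1)!\,e_{\alpha+1}(x)$ for nonnegative $x$ (expand the power and drop the terms with a repeated index), this is at most $\frac1{(\alpha+1)!}(k/m)^{\alpha+1}$. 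Summing over the $m$ colors, $\E|\s_\hev|\le\frac k{\alpha!}(k/m)^\alpha\le\frac m{\alpha!}(k/m)^\alpha\le m(k/2m)^\alpha$, the last step using $\alpha!\ge2^\alpha$ for $\alpha$ a large enough constant.

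\emph{The concentration bound.}
Write $\mu:=m(k/2m)^{\alpha/10}$; since $(k/2m)^{9\alpha/10}\le\frac12$ we have $\E|\s_\hev|\le m(k/2m)^\alpha\le\mu/2$. We may assume $\sqrt m\,(k/2m)^{\alpha/10}\ge1$ (otherwise the claimed bound is $\ge1$), i.e.\ $k\ge2m^{1-5/\alpha}$. I would treat two overlapping ranges. If $\mu\le\sqrt m\log m$, Markov already suffices: $\Pr[|\s_\hev|\ge\mu]\le\E|\s_\hev|/\mu\le(k/2m)^{9\alpha/10}$, and after taking logarithms this is $\le2^{-\sqrt m(k/2m)^{\alpha/10}}$ exactly when $\mu/\sqrt m\le 9\log_2(m/\mu)$, which holds in this range because $\mu/\sqrt m\le\log m$ while $9\log_2(m/\mu)\ge 4\log_2 m$ for large $m$. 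If $\mu>\sqrt m\log m$, I would apply the bounded-difference (McDiarmid) inequality to $|\s_\hev|$ as a function of $h(i_1),\dots,h(i_p)$: changing one $h(i_r)$ translates the $|G_r|$ colors produced by the labels destined for $i_r$, hence changes at most $2|G_r|$ of the $N_c$ by $\pm1$, so it changes $|\s_\hev|$ by at most $c_r:=2(\alpha+1)|G_r|$. With $\sum_r c_r^2\le4(\alpha+1)^2\sum_r|G_r|^2$, McDiarmid gives $\Pr[|\s_\hev|\ge\mu]\le\Pr[|\s_\hev|\ge\E|\s_\hev|+\mu/2]\le\exp\!\big(-\mu^2/(8(\alpha+1)^2\sum_r|G_r|^2)\big)$. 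When $\sum_r|G_r|^2=O(m)$ the exponent is $\Omega\!\big(\mu^2/((\alpha+1)^2m)\big)$, which is $\ge\ln 2\cdot(\mu/\sqrt m)=\ln 2\cdot\sqrt m(k/2m)^{\alpha/10}$ as soon as $\mu\ge C(\alpha+1)^2\sqrt m$ for a suitable absolute constant $C$; and $C(\alpha+1)^2\sqrt m<\sqrt m\log m$ for large $m$, so the two ranges overlap and we are done. In the application $\sum_r|G_r|^2=O(m)$ is not an issue: after the \emph{preprocessing} phase each group size is a sum of $\le m$ independent $0/1$ variables of total mean $\le1$, so $\sum_r|G_r|^2=O(m)$ for all rows simultaneously with failure probability $2^{-\Omega(\sqrt m)}$, which is within the target.

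\emph{Where the difficulty is.}
The step I expect to require real work is controlling $\sum_r|G_r|^2$ for a \emph{worst-case} $\s_\ell$ (rather than relying on the preprocessing): it can be as large as $km$, and McDiarmid then says nothing. The way around is that a single group adds at most $1$ to every $N_c$, so a heavy color must receive $\Omega(\alpha)$ of its hits from groups of size below a chosen threshold; this should let one replace $|\s_\hev|$ by the analogous quantity built from the small groups alone (with threshold $\tfrac\alpha2$), for which the McDiarmid estimate above applies verbatim. Making this reduction rigorous when the row contains many large groups---so that ``heavy $\Rightarrow$ many small hits'' is not immediate---is the real crux, and is presumably where the appendix does its work.
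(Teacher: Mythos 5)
Your plan is correct in substance but reaches the lemma by a genuinely different route than the paper, and the difference is worth recording. For the expectation both arguments are the same counting in different clothes: the paper bounds $\Pr[E_{t,j}]\le\frac{1}{t!}(k/m)^t$ using independence across destination classes and sums over $t\ge\alpha$, while you bound $N_c\mathbf{1}[N_c>\alpha]\le(\alpha+1)\binom{N_c}{\alpha+1}$ and evaluate the expectation as an elementary symmetric polynomial; yours is a cleaner rendering of the same idea. For the tail the approaches diverge: the paper conceptually splits the $m$ colors into $\sqrt m$ random color-subsets, proves the existence of a partition in which each destination class (of size $O(\log w)$, courtesy of the preprocessing) meets every subset at most $10$ times, and then union-bounds over subsets $X$ of each color-subset; you instead combine Markov in the small-deviation regime with McDiarmid's bounded-difference inequality in the large-deviation regime, viewing $|\s_{\hev}|$ as a function of the independent values $h(i_1),\dots,h(i_p)$ with Lipschitz constants $2(\alpha+1)|G_r|$. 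Note that both proofs are conditional on structure inherited from the preprocessing phase, so your worry in the final paragraph about worst-case $\s_\ell$ is about proving something stronger than the paper itself proves: the paper needs $\max_r|G_r|=O(\log w)$, you need $\sum_r|G_r|^2=O(m)$. What your route buys is a short, standard concentration argument that avoids the paper's rather delicate existence-of-a-good-color-partition step; what the paper's route buys is that it only needs the weaker and easier bound on the maximum class size. The one thin spot in your plan is precisely the claim that $\sum_r|G_r|^2=O(m)$ holds with failure probability $2^{-\Omega(\sqrt m)}$: the expectation is $O(m)$, but the exponential tail does not follow from the one-line justification you give (the class sizes are dependent, and a naive bounded-difference bound on the number of colliding destination pairs is too weak); it can be established by a union bound over size profiles (a specified class reaching size $g$ costs roughly $(e/g)^g$), or you can fall back on the paper's own bound $\max_r|G_r|=O(\log w)$, which gives $\sum_r|G_r|^2=O(m\log w)$ and still closes the gap between your Markov range and your McDiarmid range in the regime where the lemma is actually invoked ($k>m/\log^3_m w$ and $m\ge\log^{O(1)}w$), since there $\mu=m(k/2m)^{\alpha/10}$ exceeds $\sqrt m\log w$. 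With that one claim discharged, your argument is a valid and arguably simpler alternative proof.
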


Consider a row $\ell$ and let $k_i$ be the number of labels in the row after the $i$-th iteration of the \textit{synchronization}
($k_0 = m$).
We call $\ell$ is a \textit{lucky} row if the hash function is $m$-wise independent on $\ell$ during all the $i$ iterations.
Assume  $\ell$ is lucky.
During the \textit{communication} step, we process all the labels except those in the set $\s_{\hev}$.
By the above lemma, it follows that with high probability, we will have $O( m (k_i/2m)^{a/10})$ labels left 
for the next iteration on $\ell$, meaning, with high probability, $k_{i+1} = O( m (k_i/2m)^{a/10})$.

Observe that if $m \ge \log^{O(1)} w$ (for an appropriate constant in the exponent) and $k_i > m / \log_m^3 w$, then 
$\sqrt{m} \left( \frac{k_i}{2m} \right)^{\alpha/10} > 3 \log w$ which means the probability that Lemma~\ref{lem:color} ``fails''
(i.e., the ``high probability'' fails) is at most $1/w^3$. 
Thus, with probability at least $1-1/w$, Lemma~\ref{lem:color} never fails for any lucky row and thus each lucky
row will have at most 
\[ O\left( m \left( \frac{k_0}{2m} \right)^{(\alpha/10)^i} \right)\]
labels left after the $i$-th iteration. 
By Theorem~\ref{thm:hash}, the expected number of unlucky rows is at most $n/m^c$ at each iteration
which means  after $i=O(\log\log\log_m w)$ iterations, there will be $O( \frac{mw}{ \log_m^3 w} + \frac{mw i}  {m^c })= O( {mw} {\log_m^3 w})$ labels left.
So we proved that the algorithm repeats the \textit{synchronization} step at most $O(\log\log\log_m n)$ times, giving us the following theorem.

\begin{theorem}\label{thm:par}
    The permutation problem on an $w \times m$ matrix can be solved in $O(m \log\log\log_m w)$ expected
    time, assuming $m = \Omega(\log^{O(1)}w)$.
    Furthermore, the total number of random words used by the algorithm is $w + O(m \log\log\log_m w)$.
\end{theorem}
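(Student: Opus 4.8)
The plan is to separate correctness from running time, since correctness is essentially free. Every label is eventually delivered: the \emph{synchronization} counter is monotonically non‑increasing, the pending set of each row only ever loses members, and whatever survives the loop is handled by \emph{finishing}; and the \emph{communication} step obeys CAC because two labels of colour $k$ destined for the same row $i$ would both satisfy $j\equiv h(i)+k\pmod m$ and hence be the same label. So I would reduce the theorem to four estimates: (a) the cost of one repetition of the main body, (b) the expected number of repetitions, (c) the cost of \emph{preprocessing} and \emph{finishing}, and (d) the number of random words used.

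Estimates (a), (c), (d) are the routine part. One repetition builds and broadcasts a hash function $h$ — by Theorem~\ref{thm:hash} together with the doubling broadcast this is $O(\log w+m)$ — then does the local colouring ($O(m)$), the $\alpha m=O(m)$ communication steps, and the synchronization sum ($O(\log w+m)$); under the hypothesis $m=\Omega(\log^{O(1)}w)$ we have $\log w=O(m)$, so a repetition costs $O(m)$. \emph{Preprocessing} (one random rotation per row plus a transpose of each $m\times m$ block) is $O(m)$ via~\cite{catanzaro:transpose}, and \emph{finishing} packs the $\le wm/\log_m^3 w$ survivors into a $w\times O(m/\log_m^3 w)$ matrix and calls Lemma~\ref{lem:par}; I would check that on a matrix of that width Lemma~\ref{lem:par} runs in $O((m/\log_m^3 w)\cdot\log_m^3 w)=O(m)$ time, which needs $\log_{m/\log_m^3 w}w=\Theta(\log_m w)$ and hence a large enough polylog exponent in the hypothesis on $m$. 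For (d), preprocessing spends one random word per row ($w$ total), and each repetition draws the $O(\log m+\log\log w)$‑bit Pagh–Pagh seed plus the $O(m\log m)$‑bit (i.e.\ $O(m)$‑word) description of $h$, so $O(m\log\log\log_m w)$ random words over all repetitions; altogether $w+O(m\log\log\log_m w)$.

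Estimate (b) is where the real work is, and I would run it through Lemma~\ref{lem:color}. Call a row \emph{lucky} in a given execution if each repetition's fresh hash is $m$‑wise independent on that row's current (pending) label set; since that set always has size $\le m$, Lemma~\ref{lem:color} applies to it, and by Theorem~\ref{thm:hash} and a union bound a fixed row is lucky through the first $i$ repetitions with probability $\ge 1-i\,m^{-c}$. For a lucky row with $k_i\le m$ labels after repetition $i$, put $x_i=k_i/(2m)\le\tfrac12$; Lemma~\ref{lem:color} says that, except with probability $\le 2^{-\sqrt m\,x_i^{\alpha/10}}$, only the heavy labels survive, so $k_{i+1}=O(m\,x_i^{\alpha/10})$, i.e.\ $x_{i+1}\le\tfrac12 x_i^{\alpha/10}$. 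Iterating gives $x_i\le 2^{-(\alpha/10)^i}$, so $k_i$ drops below the threshold $m/\log_m^3 w$ once $(\alpha/10)^i=\Omega(\log\log_m w)$, that is after $i=O(\log\log\log_m w)$ repetitions. While $k_i>m/\log_m^3 w$ one checks $\sqrt m\,x_i^{\alpha/10}\ge 3\log w$ — this is exactly the place where the exponent in $m=\Omega(\log^{O(1)}w)$ is pinned down, depending on $\alpha$ and $c$ — so a union bound over all $w$ rows and all $O(\log\log\log_m w)$ repetitions makes Lemma~\ref{lem:color} hold for every lucky row with probability $\ge 1-1/w$. The expected number of non‑lucky rows after $i$ repetitions is $\le wi\,m^{-c}$ and each retains at most $m$ labels, so they contribute $O(ni/m^{c})=o(wm/\log_m^3 w)$. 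Hence after $O(\log\log\log_m w)$ repetitions the expected number of survivors is $O(wm/\log_m^3 w)$, so in expectation the synchronization threshold is met and the loop stops; on the $O(1/w)$ bad event I would fall back to viewing the permutation as a sort of $O(\log n)$‑bit integers and invoke Lemma~\ref{lem:par} at cost $O(m\log_m^3 w)$, which adds only $O(m)$ to the expectation. Multiplying $O(\log\log\log_m w)$ expected repetitions by $O(m)$ per repetition and adding $O(m)$ for preprocessing and finishing yields the claimed time and random‑word bounds.

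The main obstacle is closing the survivor bound across the adaptively many repetitions. I expect the delicate points to be (i) showing the doubly‑exponential shrinkage of $x_i$ is not spoiled by the extra $\tfrac12$ factors nor by rows whose pending count sits near $m$ after an unlucky repetition, (ii) keeping the union bound over rows and rounds strictly below $1$, which is precisely what forces the polynomial lower bound on $m$ in terms of $\log w$, and (iii) verifying that the non‑lucky rows — which may each retain $\Theta(m)$ labels — really do contribute only a lower‑order term, so that the survivor set shrinks to the $1/\log_m^3 w$ fraction that Lemma~\ref{lem:par} can mop up in $O(m)$ time.
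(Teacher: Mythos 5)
Your analysis of the main loop is essentially the paper's own argument: the same CAC argument for the communication step, the same per-repetition cost $O(\log w+m)=O(m)$, the same use of Lemma~\ref{lem:color} to get the recurrence $k_{i+1}=O\bigl(m(k_i/2m)^{\alpha/10}\bigr)$ for lucky rows, the same doubly-exponential decay giving $O(\log\log\log_m w)$ repetitions, the same union bound using $\sqrt{m}(k_i/2m)^{\alpha/10}\ge 3\log w$ (which is where $m=\Omega(\log^{O(1)}w)$ enters), and the same accounting of unlucky rows contributing $O(wmi/m^{c})$ leftover labels. The randomness count also matches the paper.

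There is, however, a genuine gap in your treatment of the \emph{finishing} step, which you compress to ``pack the survivors into a $w\times O(m/\log_m^3 w)$ matrix and call Lemma~\ref{lem:par}.'' Two ideas are missing. First, the packing itself is not routine: unlucky rows may still hold $\Theta(m)$ leftover labels, so they cannot store their survivors in a row of width $O(m/\log_m^3 w)$, and redistributing them is constrained by CAC; the paper needs a separate randomized procedure (an extra $\Theta(\log^2 w)$ random shifts that repeatedly match each unlucky row to rows at random offsets, transferring $\Theta(m/t)$ labels per match to lucky rows, with a Chernoff argument that every unlucky row is matched to enough lucky rows) to achieve the packing in $O(m)$ time --- your point (iii) flags unlucky rows only for the survivor \emph{count}, not for this redistribution problem. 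Second, Lemma~\ref{lem:par} applied to the packed matrix only sorts it (or partitions by destination row within the packed matrix); it does not place labels at their destinations in the original $w\times m$ matrix. A further delivery phase is needed, and making it respect CAC requires an argument (the paper's split of each sorted row's labels into those sharing the destination of its first label, those sharing the destination of its last label, and the ``middle'' labels, each group delivered by a different schedule --- the middle labels one per step because their destination rows are unique to that row, and the boundary groups scheduled by their second index $j$). Without these two components the claimed $O(m)$ cost of finishing, and hence the overall time bound, is not established; the rest of your plan, including the $O(m)$ runtime check for Lemma~\ref{lem:par} at the reduced width and the fallback on the $O(1/w)$ failure event, is sound.
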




\bibliographystyle{splncs03}
\bibliography{refs}

\newpage
\appendix 
\section{Proof of Lemma~\ref{lem:color}}
    We partition the set of $k$ labels in the row $\ell$ into equivalent classes $\s_1, \cdots, \s_w$,
    where $\s_i$ includes all the labels with row destination $i$ (obviously, most of these sets will be
    empty, since $w \gg m \ge k$).

    We claim that after the preprocessing step, $|\s_i| = O(\log w)$ with high probability.
    Remember that row $\ell$ was part of a  $m\times m$ matrix $\M'$ during the preprocessing
    phase. Also observe that due to the way the preprocessing step is set up, row $\ell$ will
    receive a uniform random label from each row of $\M'$ and crucially, these random
    labels are independent. 
    Now, if we consider $\s_i$, it follows that $|\s_i|$ is the sum of $m$ independent 0-1 random
    variables (but not identical) with constant mean and thus our claim follows after a 
    relatively straightforward application of Chernoff inequality.
    Also observe that we have \[ \sum_{i=1}^w |\s_i| = k. \]

    Now, consider a fixed color $j$. 
    The probability that a label in $\s_i$ receives color $j$ is $|\s_i| / m$; this
    follows since we have assumed that the hash function $h$ does not fail on $\ell$ which means 
    the colors received by labels in $\s_i$ are dependent and distinct 
    (i.e., knowing the color of one label in $\s_i$ forces the remaining colors) but they are
    independent of colors received by all the other equivalent classes.
    Let $E_{t,j}$ be the event that $t$ labels are assigned color $j$ in this row.
    Due to $m$-way independence of colors from different equivalent classes, 
    \[ \Pr[E_{t,j}] = \sum_{\substack{I\subset [w] \\ |I| = t}}  \prod_{i\in I} \frac{|\s_i|}{m} < \frac{1}{t!} \left( \sum_{i=1}^w \frac{|\s_i|}{m}  \right)^t  = \frac{1}{t!}\left( \frac{k}{m} \right)^t.  \]
    Using Stirling's formula, we can show that there exists a constant $\alpha$ such that 
    $\E[|\s_i|] = \sum_{t=\alpha}^\infty t \Pr[E_{t,j}] \le \left( \frac{k}{2m} \right)^\alpha$ 
    which proves the first part of the lemma.

    Proving the high concentration bound is more involved. 
    We now conceptually partition the colors into $\sqrt{m}$ \textit{color-subsets}, 
    $\C_1, \cdots, \C_{\sqrt{m}}$ in the following way: for each color we independently and uniformly pick its 
    color-subset (with probability $\frac{1}{\sqrt{m}}$).
    Consider a particular equivalent class $\s_i$. 
    As discussed, there are only $m$ possible different choices for colors to be assigned to labels of $\s_i$
    and in particular, knowing the color of any label in $\s_i$ determines the remaining colors.
    Consider one of these $m$ possible color assignments to labels of $\s_i$. 
    For every $j$, $1 \le j \le {\sqrt{m}}$, we consider the ``bad'' event when more than 10 colors from color-subset
    $\C_j$ have been assigned to labels of $\s_i$. 
    Since the probability of including any color in $\C_j$ is $\frac{1}{\sqrt{m}}$, 
    it follows that the probability of this bad event is at most $O\left( |\s_i|^{10} \sqrt{m}^{-10}\right) = O\left( |\s_i|^{10} m^{-5}\right)$.
    Remember that for every $\s_i$ there are $m$ such bad events (one for every color assignment), 
    the number of non-empty equivalent classes $\s_i$ is at most $m$,
    $\s_i$ has $O(\log w)$ labels, and $m$ is at least polylogarithmic in $w$; using all these facts, we can deduce that with some positive probability, none of the
    bad events happen.
    Thus, there exists a partition of the set of colors into $\sqrt{m}$ color-subsets such that no bad event happens. 
    Using another straightforward application of Chernoff bound, we can also guarantee that $|\C_j| = \Theta(\sqrt{m})$.
    For the rest of this proof, assume our partition into color-subsets is one of them (remember that this 
    partitioning is only conceptual and it is not used during the algorithm).

    We look at number of labels assigned colors from $\C_j \cap \C_{\hev}$, for a fixed index $1 \le j \le \sqrt{m}$.
    If we can show high concentration bound for this number, then the second part of our lemma follows (using union bound).
    To do this, for every subset $X \subset \C_j$, we estimate the probability that at least $\alpha |X|$
    labels receive colors from $X$.
    For $|X| =  \sqrt{m} \left(c \frac{k}{m} \right)^{\alpha/10}$ and large enough constants $c$ and $\alpha$, 
    this is bounded by 
    \begin{align*} 
        &\binom{|\C_j|}{|X|} \sum_{I= \alpha|X|/10}^m \binom{k}{I} \left(   \frac{X}{m}\right)^{I}  \le  \left( \frac{|\C_i|}{|X|} \right)^{|X|} \sum_{I= \alpha|X|/10}^m \left( \frac{k}{I} \right)^{I} \left( \frac{X}{m} \right)^{I} \ll \\ 
        & \left( \frac{|\C_i|}{|X|} \right)^{|X|} \left( \frac{k |X|}{m  \alpha|X|/10} \right)^{\alpha |X|/10} \le \Theta\left( \frac{m}{ck}  \right)^{\alpha|X|/10} \left( \frac{10k}{m  \alpha} \right)^{\alpha |X|/10 }.
    \end{align*}
    where the notation $\ll$ is used to replace the O-notation, i.e., $f \ll g$ denotes $f = O(g)$.
    The same argument can be applied to the remaining $\sqrt{m}$ color-subsets and thus the last part of the
    lemma follows from an easy application of union bound.

\section{Finishing off in  Theorem~\ref{thm:par}}\label{sec:ap:pack}
Consider the matrix $\M$ after the last \textit{synchronization} step; there are less than $ \frac{wm}{\log^3_m w}$ labels left.
As discussed, first we like to pack the remaining labels into a $w \times O(\frac{m}{ \log^3_m w})$ matrix in $O(m)$ time. 
This is not trivial as the matrix can have rows with $\Omega(m)$ labels left.

Fortunately, we will use the fact that the majority of the rows will have
$O(m / \log^3_m w)$ labels left.
To see, this, consider the hash function $h$ used in the algorithm. 
The probability that the hash function fails for any row  is at most $1 / m^c$ (where $c$ is the constant defined in Theorem~\ref{thm:hash}).
We call a row where this hash function never fails a \textit{lucky} row and otherwise we call it
an \textit{unlucky} row.
We now claim that all the lucky rows are ``narrow'' and they have $O(m / \log^3_m n)$ labels.
Observe that if $m \ge \log^{O(1)} w$ (for an appropriate constant in the exponent) and $k_i > m / \log_m^3 w$, then 
$\sqrt{m} \left( \frac{k_i}{2m} \right)^{\alpha/10} > 3 \log w$. 
Thus, for every lucky row, the ``high probability'' bound in  Lemma~\ref{lem:color} is at least $1-1/w^3$,
which means, we can assume Lemma~\ref{lem:color} never fails (the expected increase in the running time for when the
lemma fails is negligible, since we can afford to spend $O(w^2)$ time in such rare cases).
As the analysis shows, we can prove that after $r=O(\log\log\log_m n)$ iterations of the \textit{synchronization} step, 
the number of labels left in each row is $O(m / \log^3_m n)$.
It thus follows that the expected number of unlucky rows is at most $O( rw / m^{c}) < w/m^{c-1}$.
By Markov inequality, the probability that the number of unlucky rows is more than $w/m^{c-3}$ is at most 
$1/m^2$ which means even if the algorithm runs in $O(m^2)$ time, then the increase in the expected
running time is again negligible.
Thus, we assume the number of unlucky rows is at most $w/m^{c-3}$.

To pack the matrix $\M$,  we pick another collection of $t=\Theta(\log^2 w)$ random integers
$t_0, \cdots, t_{t-1}$, between 1 and $w$ and broadcast them to all the rows.
Next, for each $i$, $1 \le i \le t$, row $j$ communicates with row $j' \equiv j+t_i \inmod{w}$ and these rows transfer 
up to $O(m/t)$ labels from the row containing the most labels to the row containing the least labels, as follows:
if both of these rows are lucky or both unlucky, then they do nothing. 
So assume, $j'$ is lucky and $j$ is unlucky. 
If $j$ is the first unlucky row matched to $j'$, then $j$ transfers $2m/t$ labels to $j'$, otherwise, nothing happens.
Now, lets fix a particular unlucky row $j$.
The probability $j$ matches to an unlucky row is very small, at most $1/m^{c-3}$.
Crucially, since the random numbers $t_1, \cdots, t_{t-1}$ are independence, the rows 
$j$ matches to are also independent. 
This means, by Chernoff inequality, with high probability, $j$ will match to at least
$t/2$ lucky rows.
Each such match will relieve $j$ of $2m/t$ labels, meaning, $j$ can be relieved of all of its labels at the end.
Each matching requires $O(m/t)$ time and thus the total time is $O(m)$.

Now we have a packed matrix $\M$.

We use Lemma~\ref{lem:par} to sort the packed matrix.
After sorting, consider a row containing labels $(i_1, j_1), \cdots, (i_t, j_t)$, sorted lexicographically, that is
either $i_{k} < i_{k+1}$ or if $i_{k} = i_{k+1}$ then $j_k < j_{k+1}$.
We have three types of labels $(i_k,j_k)$ here: one, labels  where $i_k = i_1$ 
(there are at the beginning of the list), two, labels where $i_k = i_t$ (these are at the end of the list), and three the rest (everything else that lies in the middle).
The labels in the third group have the property that no other row contains another label
$(i'_k,j'_k)$ with $i_k = i'_k$ (since the labels were sorted) which means, row $i$ can safely send 
any label $(i_k,j_k)$ to row $i_k$ without violating CAC. 
Thus, in $m$ time, we transfer all the middle labels.
Next, we send $(i_k,j_k)$ (with $i_k = i_1$) to row $i_k$ during step the next $j_k$-th step which
takes care of the labels in the first group.
The labels in the second group can be handled similarly.

\end{document}